\date{}
\newtheorem{theorem}{Theorem}[section]
\theoremstyle{definition}
\theoremstyle{remark}
\newtheorem{remark}[theorem]{Remark}
\numberwithin{equation}{section}
\begin{document}
\title{ Numerical analysis for Spread option pricing model in illiquid underlying asset market: full feedback model}
\author{ \small A. R. Yazdanian\thanks{Corresponding author. Tel: +98 912 848 7540;
E-mail: ahmadreza.yazdanian@gmail.com} $^{a}$, \small Traian A. Pirvu $^b$  }
\maketitle\centerline{\it\small  $^a$ Faculty of Mathematics, Statistics and Computer Science ,
Semnan University, Tehran, Iran.}
\maketitle\centerline{\it\small  $^b$ Department of Mathematics and Statistics, McMaster University, Hamilton, Canada.}

\begin{abstract}
 
 This paper performs the numerical analysis and the computation
 of a  Spread option in a market with imperfect liquidity. The number of shares traded in the stock market has a direct impact on the stock's price. Thus, we consider a \textit{full-feedback model} in which price impact is fully incorporated into the model. The price of a Spread
 option is characterize by a nonlinear partial differential equation. This is reduced to
 linear equations by asymptotic expansions. The Peaceman-Rachford scheme as an alternating direction implicit method is employed to solve the linear equations numerically.  We discuss the stability and the convergence of the numerical scheme. Illustrative examples are included to demonstrate the validity and applicability of the presented method. Finally we provide a numerical analysis of the illiquidity effect in replicating an European Spread option; compared to the Black-Scholes case, a trader generally buys more stock to replicate this option.
\end{abstract}

\noindent{\bf Keywords:}  Spread option pricing, Price impact, Illiquid markets, Nonlinear finance, Asymptotic analysis, Peaceman-Rachford scheme.\\

\section{Introduction}\label{section1}
Classical asset pricing theory assumes that traders act as price takers, that is,
they have no impact on the prices paid
or received. The relaxation of this assumption and its impact
on realized returns in asset pricing models is called liquidity risk. Consistent with this discussion, most of the option pricing models assume that an option trader cannot affect the price in trading the underlying asset to replicate the option payoff, regardless of her trading size. The papers of Black and Scholes \cite{Black}, and much of the work undertaken in mathematical finance has been done under this underlying assumption (which is reasonable only in a perfectly liquid market). \\
 In presence of a price impact, the replication of an option becomes more involved.
 The first issues is whether or not the option is perfectly replicable. Second, one
 has to find out how the presence of price impact affects the the replicating costs. 
 This encouraged researchers to develop a Black-Scholes model with price impact due to a large trader who is able to move the price by his/her actions. 
 An excellent survey of these research can be found in \cite{Frey, Wilmott, Yong, Kristoffer}.
 Most of these research works discussed how the price impact affects the replication of an option written on a single stock.
 
 The purpose of our paper is to investigate the effects of imperfect liquidity on the replication of an European  Spread option by a typical option trader in a full-feedback model
 (any trade will impact the prices of the underlyings).
 Spread option is a simple example of multi-assets derivative, whose payoff is the difference between the prices of two or more assets; for instance let the prices of two underlying assets at time $t\in[0,T]$ be $S_1(t)$ and $S_2(t)$, then the payoff function of an European Spread
option with maturity $T$ is $[S_1(T)-S_2(T)-k]^+$ (here $k$ is the strike of the option and the function $x^+$ is defined as $ x^+ = max(x, 0)$). Therefore the holder of an European Spread option has the right but not the obligation to buy the spread $S_1(T)-S_2(T)$ at the prespecified price $k$ and maturity $T$. In general, there is no any analytical formula for the price of multi-assets options (even in models with perfect liquidity). The only exception is Margrabe formula for exchange options (Spread options with a strike of zero)\cite{Margrabe}. Margrabe derived a Black and Scholes type solution for this class of options as follows
\begin{equation}\label{1-1}
C^M = S_1\Phi(d_+) - S_2\Phi(d_-),
\end{equation}
where $\Phi$ is the standard normal cumulative distribution function and
\begin{equation}\label{1-2}
\begin{split}
&d_{\pm}=\dfrac{\ln(S_1/S_2)}{\sigma\sqrt{T}}\pm \dfrac{1}{2}\sigma\sqrt{T},\\
&\sigma=\sqrt{\sigma_1^2+\sigma^2_2-2\rho\sigma_1\sigma_2}.
\end{split}
\end{equation}
 Since that a linear combination of correlated lognormals is not lognormal, for non-zero strikes, there is no closed form Spread option valuation formula under the multivariate lognormal model. People rely on approximation formulas and
numerical methods for Spreads valuation. Kirk\cite{Kirk} suggested the following analytical approximation for a Spread option with payoff $(S_1(T) - S_2(T) -k)^+$
\begin{equation}\label{1-3}
C^M = S_1\Phi(d_+) - (S_2+k)\Phi(d_-),
\end{equation}
where 
\begin{equation}\label{1-4}
\begin{split}
&d_{\pm}=\dfrac{\ln(S_1/(S_2+k))}{\sigma\sqrt{T}}\pm \dfrac{1}{2}\sigma\sqrt{T},\\
&\sigma=\sqrt{\sigma_1^2+(\dfrac{S_2}{S_2+k})^2\sigma^2_2-2\dfrac{S_2}{S_2+k}\rho\sigma_1\sigma_2}.
\end{split}
\end{equation}
This formula provides a good approximation of Spread option prices when the
strike $k$ is not far from zero. All these works on spread options assume a
model with perfect liquidity.

Several Spread options are traded in the markets. Some popular Spread option products are : fixed Income Spread options and  commodity Spread options (including Crush Spread options, Crack Spread options, Spark Spread options).
In this work, we will focus our interest on Oil Markets and more specifically one of the most frequently quoted Spread options which are Crack Spreads. A Crack Spread
represents the differential between the price of crude oil and petroleum products (gasoline or heating oil). The underlying indexes comprise
futures prices of crude oil, heating oil and unleaded gasoline. Details of crack Spread
options can be found in the New York Mercantile Exchange (NYMEX) Crack Spread
Handbook \cite{New}. In the oil markets with finite liquidity, trading does affect the underlying assets price. In our study, we are going to investigate the effects of impact price on Spread option pricing in oil markets, when trading affects only the crude oil price and not petroleum products .

This paper is organized as follows: Section\ref{section2} discusses the general framework. In Section\ref{section3}  we apply an asymptotic expansion for the full nonlinear 
partial differential equation which characterizes the spread option price. 
In Section\ref{section4}, we propose a numerical method for the linearized equation. 
( the Peaceman and Rachford  numerical scheme \cite{Peaceman} is employed). We  discuss the stability and convergence of this scheme. In Section\ref{section5}, we carry out several numerical experiments and provide a numerical analysis of the model for European Spread calls. Section\ref{section6} contains the concluding remarks. 

\section{Statement of the problem}\label{section2}
In this section we describe the setup used for pricing Spread options. Our model of a financial market, based on a filtered probability space $(\Omega,\mathcal{F},\lbrace\mathcal{F}_t\rbrace_{t \in [0,T]},\mathbf{P} )$ that satisfies the usual conditions,
consists of two assets. Their prices are modeled by a
two--dimensional diffusion process $S(t)=(S_1(t),S_2(t))$. All the stochastic processes in this work are assumed to be $\lbrace\mathcal{F}_t\rbrace_{t\geq 0}$-adapted. Their dynamics are given by the following stochastic differential equations, in which $W(t)=(w_1(t),w_2(t))$ is defined a two--dimensional standard correlated Brownian motion with correlation $\rho$, and $\lbrace\mathcal{F}_t\rbrace_{t \in [0,T]}$ being its natural filtration augment by all $\mathbf{P}$-null sets:
\begin{equation}\label{2-0}
\dfrac{dS_i(t)}{S_i(t)}=\mu_i(t,S_i(t)) dt+\sigma_i(t,S_i(t))dw_i(t);~ i=1,2.
\end{equation} 
Here $\mu_i(t,S_i(t))$ and $\sigma_i(t,S_i(t))$ are the expected return and the volatility of stock $i$ in the absence of price impact. It is possible to add a forcing term, $\lambda(t,S_1(t)),$i.e.
\begin{flalign}\label{2-1}
\begin{split}
dS_1(t)&=\mu_1(t,S_1(t))S_1(t) dt+\sigma_1(t,S_1(t))S_(t)dw_1(t)+\lambda(t,S_1)d\Delta_1(t),\\
dS_2(t)&=\mu_2(t,S_2(t)) S_2(t)dt+\sigma_2(t,S_2(t))S_2(t)dw_2(t),
\end{split}
\end{flalign}
($\lambda (t,S_1(t))$ is the price impact function on the first stock). The term $\lambda(t,S_1(t))d\Delta_1(t)$ represents the price impact
of the trading strategy $\Delta_1(t).$ We note that the classical Black-Scholes model is a special case of
this model with $\lambda(t,S_1(t))=0$.
Our aim is to price a Spread option in this illiquid market model. The option's payoff at maturity $T$ (a call at this case) is:
\begin{flalign}\label{2-2}
h(S_1(T),S_2(T))=(S_1(T)-S_2(T)-k)^+,
\end{flalign}
where $k$ is the strike price. The well-known generalized Black-Scholes equation (more details \cite{Duffie}) is used to characterize the spread option price within
our full-feedback model for $S_1.$ This leads to the nonlinear PDE 
  \begin{equation}\label{2-3}
\begin{split}
&\frac{\partial V}{\partial t}+
\dfrac{1}{2(1-\lambda\dfrac{\partial^2 V}{\partial S_1^2})^2}(\sigma_1^2S_1^2+\lambda^2\sigma_2^2S_2^2(\dfrac{\partial^2 V}{\partial S_1\partial S_2})^2
+2\rho \sigma_1\sigma_2S_1S_2 \lambda \dfrac{\partial^2 V}{\partial S_1\partial S_2})\frac{\partial^2 V}{\partial S_1^2}+\dfrac{1}{2}\sigma_{2}^2S_{2}^{2}\frac{\partial^2 V}{\partial S_2^2}+\dfrac{1}{1-\lambda\dfrac{\partial^2 V}{\partial S_1^2}}\\
&\times (\sigma_1\sigma_2\rho S_1S_2+\lambda \sigma_2^2S_2^2\dfrac{\partial^2 V}{\partial S_1\partial S_2})\frac{\partial^2 V}{\partial S_1\partial S_2}
+r(S_{1}\frac{\partial V}{\partial
S_{1}}+S_{2}\frac{\partial V}{\partial S_{2}})-rV=0,~~~0<S_{1},S_{2}<\infty, 0\leq t<T,\\
&V(T,S_{1},S_{2})=h(S_{1},S_{2}),~~~ 0<S_{1},S_{2}<\infty, 
\end{split}
\end{equation}
with the terminal condition  in $T$. Notice that the classical Black-Scholes model for Spread option is a special case of this model with $\lambda=0$.

\section{Matched Asymptotic Expansions}\label{section3}
 In this section we use a matched asymptotic expansion technique to linearize (\ref{2-3}). 
 For this purpose we let $\lambda(t, S_1(t))=\varepsilon \widehat{\lambda}(t, S_1(t)),$ so that (\ref{2-3}) becomes 
\begin{flalign}\label{5-0}
\begin{split}
&\frac{\partial V}{\partial t}+
\dfrac{1}{2(1-\varepsilon \widehat{\lambda}\dfrac{\partial^2 V}{\partial S_1^2})^2}(\sigma_1^2S_1^2+(\varepsilon \widehat{\lambda})^2\sigma_2^2S_2^2(\dfrac{\partial^2 V}{\partial S_1\partial S_2})^2
+2\rho \sigma_1\sigma_2S_1S_2 \varepsilon \widehat{\lambda} \dfrac{\partial^2 V}{\partial S_1\partial S_2})\frac{\partial^2 V}{\partial S_1^2}\\
&+\dfrac{1}{2}\sigma_{2}^2S_{2}^{2}\frac{\partial^2 V}{\partial S_2^2}+\dfrac{1}{1-\varepsilon \widehat{\lambda}\dfrac{\partial^2 V}{\partial S_1^2}}(\sigma_1\sigma_2\rho S_1S_2+\varepsilon \widehat{\lambda} \sigma_2^2S_2^2\dfrac{\partial^2 V}{\partial S_1\partial S_2})\frac{\partial^2 V}{\partial S_1\partial S_2}
+r(S_{1}\frac{\partial V}{\partial
S_{1}}+S_{2}\frac{\partial V}{\partial S_{2}})-rV=0.
\end{split}
\end{flalign}
 
By replacing $V(t,S_{1},S_{2})$ in the equation (\ref{5-0}) with
\begin{flalign}\label{5-2}
V(t,S_{1},S_{2})\sim V^0(t,S_{1},S_{2})+\varepsilon V^1(t,S_{1},S_{2})+o(\varepsilon^2),
\end{flalign}
we get
\begin{flalign}\label{5-3}
\begin{split}
&\frac{\partial (V^0+\varepsilon V^1)}{\partial t}+
\dfrac{\frac{\partial^2  (V^0+\varepsilon V^1)}{\partial S_1^2}}{2(1-\varepsilon \widehat{\lambda}\dfrac{\partial^2  (V^0+\varepsilon V^1)}{\partial S_1^2})^2}(\sigma_1^2S_1^2+(\varepsilon \widehat{\lambda})^2\sigma_2^2S_2^2(\dfrac{\partial^2  (V^0+\varepsilon V^1)}{\partial S_1\partial S_2})^2\\
&+2\rho \sigma_1\sigma_2S_1S_2 \varepsilon \widehat{\lambda} \dfrac{\partial^2  (V^0+\varepsilon V^1)}{\partial S_1\partial S_2})
+\dfrac{1}{1-\varepsilon \widehat{\lambda}\dfrac{\partial^2  (V^0+\varepsilon V^1)}{\partial S_1^2}} (\sigma_1\sigma_2\rho S_1S_2
+\varepsilon \widehat{\lambda} \sigma_2^2S_2^2\dfrac{\partial^2  (V^0+\varepsilon V^1)}{\partial S_1\partial S_2})\frac{\partial^2  (V^0+\varepsilon V^1)}{\partial S_1\partial S_2}\\
&+\dfrac{1}{2}\sigma_{2}^2S_{2}^{2}\frac{\partial^2  (V^0+\varepsilon V^1)}{\partial S_2^2}+r(S_{1}\frac{\partial  (V^0+\varepsilon V^1)}{\partial
S_{1}}+S_{2}\frac{\partial  (V^0+\varepsilon V^1)}{\partial S_{2}})-r (V^0+\varepsilon V^1)+o(\varepsilon^2)=0.
\end{split}
\phantom{\hspace{9cm}}
\end{flalign}
Using Maclaurin series expansion on some terms of above equation, we have:
\begin{flalign}\label{5-4}
\begin{split}
\dfrac{1}{2(1-\varepsilon \widehat{\lambda}\dfrac{\partial^2  (V^0+\varepsilon V^1)}{\partial S_1^2})^2}&=
\dfrac{1}{2}+\varepsilon \widehat{\lambda} \dfrac{\partial^2 V^0}{\partial S_1^2 }+o(\varepsilon^2)\\
\dfrac{1}{1-\varepsilon \widehat{\lambda}\dfrac{\partial^2  (V^0+\varepsilon V^1)}{\partial S_1^2}}&=
1+\varepsilon \widehat{\lambda}\dfrac{\partial^2 V^0}{\partial S_1^2 }+o(\varepsilon^2).
\end{split}
\end{flalign}
Therefore we get the following linear equation for $V^0(t,S_1,S_2)$ 
\begin{flalign}\label{5-5}
\begin{split}
&\frac{\partial V^0}{\partial t}+\frac{
\sigma_{1}^{2}S_{1}^{2}}{2}\frac{\partial^2 V^0}{\partial S_1^2}+ \frac{
\sigma_{2}^{2}S_{2}^{2}}{2}\frac{\partial^2 V^0}{\partial S_2^2}+
 \sigma_{1}\sigma_{2}S_{1}S_{2}\rho \frac{\partial^2 V^0}{\partial S_1\partial S_2}+r[S_{1}\frac{\partial V^0}{\partial S_1}+S_{2}\frac{\partial V^0}{\partial S_2}]-rV^0= 0,\\
&V^0(T,S_{1},S_{2})=h(S_{1},S_{2}),~~~~~ 0<S_{1},S_{2}<\infty,
\end{split}
\end{flalign}
and for $V^1(t,S_1,S_2)$ 
\begin{equation}\label{5-6}
\begin{split}
&\frac{\partial V^1}{\partial t}+\frac{
\sigma_{1}^{2}S_{1}^{2}}{2}\frac{\partial^2 V^1}{\partial S_1^2}+ \frac{
\sigma_{2}^{2}S_{2}^{2}}{2}\frac{\partial^2 V^1}{\partial S_2^2}+
 \sigma_{1}\sigma_{2}S_{1}S_{2}\rho \frac{\partial^2 V^1}{\partial S_1\partial S_2}+r[S_{1}\frac{\partial V^1}{\partial S_1}+S_{2}\frac{\partial V^1}{\partial S_2}]-rV^1= G,\\
&V^1(T,S_{1},S_{2})=0,~~~~~ 0<S_{1},S_{2}<\infty.
\end{split}
\end{equation}
Here 
\begin{equation}\label{5-7}
\begin{split}
G&=-\widehat{\lambda}(2\rho \sigma_1\sigma_2S_1S_2 \dfrac{\partial^2 V^0}{\partial S_1\partial S_2}\dfrac{\partial^2 V^0}{\partial S_1^2}+\sigma_1^2S_1^2(\dfrac{\partial^2 V^0}{\partial S_1^2})^2 +\sigma_2^2S_2^2(\dfrac{\partial^2 V^0}{\partial S_1\partial S_2})^2).
\end{split}
\end{equation}
 In the following, we apply a numerical scheme for comupting $V^0(t, S_1, S_2)$ and $V^1(t,S_1,S_2)$.

\section{Numerical Solution of the Partial Differential Equations}\label{section4}
 \subsection{The Alternating Direction Implicit}
In this section, we present a numerical method for solving the partial differential equations:
\begin{flalign}\label{6-0}
\begin{split}
&\frac{\partial V^0}{\partial t}+\frac{
\sigma_{1}^{2}x^{2}}{2}\frac{\partial^2 V^0}{\partial x^2}+ \frac{
\sigma_{2}^{2}y^{2}}{2}\frac{\partial^2 V^0}{\partial y^2}+
 \sigma_{1}\sigma_{2}xy\rho \frac{\partial^2 V^0}{\partial x\partial y}
+r[S_{1}\frac{\partial V^0}{\partial x}+S_{2}\frac{\partial V^0}{\partial y}]-rV^0= 0,\\
&V^0(T,x,y)=h(x,y),~~~~~ 0<x,y<\infty,
\end{split}
\end{flalign}
and
\begin{equation}\label{6-1}
\begin{split}
&\frac{\partial V^1}{\partial t}+\frac{
\sigma_{1}^{2}x^{2}}{2}\frac{\partial ^2 V^1}{\partial x^2}+\frac{
\sigma_{2}^{2}y^{2}}{2}\frac{\partial^2 V^1}{\partial y^2} +\rho \sigma_{1}\sigma_{2}xy\frac{\partial^2 V^1}{\partial x\partial y}+r[S_{1}\frac{\partial V^1}{\partial x}+y\frac{\partial V^1}{\partial y}]-rV^1=G,\\
 &V^1(T,x,y)=0,~~~ 0<x,y<\infty .
\end{split}
\end{equation}
Functions $V^0(t,x,y)$ and $ V^1(t,x,y)$ are defined on $[0,T]\times [0,\infty)\times [0,\infty)$. 
To simplify notations we write:
\begin{equation}\label{6-2}
L=\dfrac{\partial}{\partial t}+A_{x}+A_{y}+A_{xy},
\end{equation}
where 
\begin{equation}\label{6-3}
\begin{split}
&A_{x}=\dfrac{1}{2}\sigma_{1}^2x^2\dfrac{\partial^2}{\partial x^2}+rx\dfrac{\partial}{\partial x}-r\Theta , \\
&A_{y}=\dfrac{1}{2}\sigma_{2}^2y^2\dfrac{\partial^2}{\partial y^2}+ry\dfrac{\partial}{\partial y}-r(1-\Theta) ,\\
&A_{xy}=\sigma_{1}\sigma_{1}xy \rho \dfrac{\partial^2}{\partial x\partial y},
\end{split}
\end{equation}
and $0\leq \Theta \leq 1$. While symmetry considerations might speak for an $\Theta=\dfrac{1}{2}$
, it is computationally simpler to use  $\Theta=0$ or $\Theta=1$, i.e., nclude the $rV-$ term fully in one of the two operators.
Hence, we can write
\begin{equation}\label{6-4}
\left\{
	\begin{array}{ll}
		 LV^0(t,x,y)=0, &\\
		LV^1(t,x,y)=G, &
	\end{array}
\right.
 \end{equation}
where 
\begin{equation}\label{6-5}
\begin{split}
G&=-\widehat{\lambda}(2\rho \sigma_1\sigma_2xy \dfrac{\partial^2 f^0}{\partial x\partial y}\dfrac{\partial^2 f^0}{\partial x^2}+\sigma_1^2x^2(\dfrac{\partial^2 f^0}{\partial x^2})^2 +\sigma_2^2y^2(\dfrac{\partial^2 f^0}{\partial x\partial y})^2).
\end{split}
\end{equation}
In order to define a numerical solution for these equations, we need to truncate the spatial domain to a bounded area: $\{(x,y);0\leq x\leq x_{ max} , 0 \leq y \leq y_{max}\}$. Let us introduce a grid of points in the time interval and in the truncated spatial domain:
\begin{equation}\label{6-6}
\begin{split}
&t_l=l\Delta t,~~l=0,1,...L, ~~\Delta t=\dfrac{T}{L},\\
&x_m=m\Delta x,~~m=0,1,...M,~~\Delta x=\dfrac{x_{max}}{M},\\
&y_n=n\Delta y,~~n=0,1,...N,~~\Delta y=\dfrac{y_{max}}{N}.
\end{split}
\end{equation}
For the simplicity assume that $x_{max}=y_{max}$ and $\Delta x=\Delta y$. Functions  $ V^0(t,x,y)$ and $ V^1(t,x,y)$ evaluated at a point on the grid
are denoted as $V^{0,l}_{mn}=V^0(t_l,x_m,y_n)$ and $V^{1,l}_{mn}=V^1(t_l,x_m,y_n)$. If we need to refer to the solution at a specific time point, we will use notation  $V^{0,l}=V^0(t_l,x_m,y_n)$ and $V^{1,l}=V^1(t_l,x_m,y_n)$. Furthermore, let symbols $ A_{dx}, A_{dy}$ and $ A_{dxdy}$ denote second-order approximations
to the operators $A_{x}, A_{y}$ and $A_{xy}$. Since the differential operator can be split as in (\ref{6-3}) we can use Alternating Direction Implicit (ADI). The general idea is to split a time step into two and consider one operator or one space coordinate at a time. We implement the Peaceman-Rachford scheme. Let us begin by discretizing (\ref{6-0}) in the time-direction:
\begin{equation}\label{6-7}
\begin{split}
&V^0_t((l+1/2)\Delta t,x,y)=\frac{V^{0,l+1}-V^{0,l}}{\Delta t}+O(\Delta t ^2)\\
&(A_x+A_y+A_{xy})V^0=\dfrac{1}{2}A_x(V^{0,l+1}+V^{0,l})+\dfrac{1}{2}A_y(V^{0,l+1}+V^{0,l})+\dfrac{1}{2}A_{xy}(V^{0,l+1}+V^{0,l})+O(\Delta t ^2).
\end{split}
\end{equation}
Next Insert into (\ref{6-2}), multiply by $\Delta t$, and rearrange to obtain:
\begin{equation}\label{6-8}
\begin{split}
(I-\dfrac{1}{2}\Delta t A_x-\dfrac{1}{2}\Delta t A_y)V^{0,l}=(I+\dfrac{1}{2}\Delta t A_x+\dfrac{1}{2}\Delta t A_y)V^{0,l+1}+\dfrac{1}{2}\Delta t A_{xy}(V^{0,l+1}+V^{0,l})+O(\Delta t ^3),
\end{split}
\end{equation}
where $I$ denotes the identity operator. If we add $\dfrac{1}{4}\Delta t^2 A_xA_y V^{0,l}$  on the left side and $\dfrac{1}{4}\Delta t^2 A_xA_y V^{0,l+1}$  on the right side then we
commit an error which is $O(\Delta t^3)$ and therefore:
\begin{equation}\label{6-9}
\begin{split}
&(I-\frac{1}{2}\Delta t A_x)(I-\frac{1}{2}\Delta t A_y) V^{0,l}\\
&=(I+\frac{1}{2}\Delta t A_x)(I+\frac{1}{2}\Delta t A_y) V^{0,l+1}+\frac{1}{2}\Delta t A_{xy}( V^{0,l+1}+ V^{0,l})+O(\Delta t^3).
\end{split}
\end{equation}
We now discretize in the space coordinates replacing $A_x$ by $A_{dx}$, $A_y$ by $A_{dy}$ and $A_{xy}$ by $A_{dxdy}$
\begin{equation}\label{6-10}
\begin{split}
&(I-\frac{1}{2}\Delta t A_{dx})(I-\frac{1}{2}\Delta t A_{dy}) V^{0,l}\\
&=(I+\frac{1}{2}\Delta t A_{dx})(I+\frac{1}{2}\Delta t A_{dy}) V^{0,l+1}+\frac{1}{2}\Delta t A_{dxdy}( V^{0,l+1}+ V^{0,l})+O(\Delta t^3)+O(\Delta t \Delta x^2).
\end{split}
\end{equation}
This leads to the Peaceman-Rachford method \cite{Strikwerda}
\begin{equation}\label{6-11}
\begin{split}
&(I-\frac{\Delta t}{2}A_{d x})V^{0,l+1/2}=(I+\frac{\Delta t}{2}A_{dy})V^{0,l+1}+\alpha ,\\
&(I-\frac{\Delta t}{2}A_{dy})V^{0,l}=(I+\frac{\Delta t}{2}A_{dx})V^{0,l+1/2}+\beta ,
\end{split}
\end{equation}
where auxiliary function $V^{0,l+1/2}$  links above equations. We have introduced the values $\alpha$ and $\beta$ to take into account the mix derivative
term because it is not obvious how this term should be split. To align (\ref{6-11}) with (\ref{6-10}), we require that

\begin{equation}\label{6-12}
\begin{split}
(I+\frac{\Delta t}{2}A_{d x})\alpha +(I-\frac{\Delta t}{2}A_{d x})\beta =\frac{1}{2}\Delta t A_{dxdy}( V^{0,l+1}+ V^{0,l}),
\end{split}
\end{equation}
where a discrepancy of order $O(\Delta t^3)$ may be allowed with reference to a similar
term in (\ref{6-9}). One of the possible choices for  $\alpha$ and $\beta$ is 
\begin{equation}\label{6-13}
\alpha=\frac{\Delta t}{2}A_{dxdy}V^{0,l+1} ,~~~~\beta =\frac{\Delta t}{2}A_{dxdy} V^{0,l+1/2}.
\end{equation}
Finally, the Peaceman-Rachford scheme for $V^0$  in (\ref{6-0}) is obtain as follows
\begin{equation}\label{6-14}
\begin{split}
&(I-\frac{\Delta t}{2}A_{dx})V^{0,l+1/2}=(I+\frac{\Delta t}{2}A_{dy})V^{0,l+1}+\frac{\Delta t}{2}A_{dxdy}V^{0,l+1},\\
&(I-\frac{\Delta t}{2}A_{dy})V^{0,l}=(I+\frac{\Delta t}{2}A_{dx})V^{0,l+1/2}+\frac{\Delta t}{2}A_{dxdy}V^{0,l+1/2}.
\end{split}
\end{equation}
In a first step we calculate $V^{0,l+1/2}$ using $V^{0,l+1}$. This step is implicit in direction $x$. In a second step, defined by equations (\ref{6-14}), we use $V^{0,l+1/2}$  to calculate $V^{0,l}$. This step is implicit in the
direction of $y$. The Peaceman-Rachford scheme for $V^1$  in (\ref{6-1}) is obtained as follows: 
\begin{equation}\label{6-15}
\begin{split}
&(I-\frac{\Delta t}{2}A_{dx})V^{1,l+1/2}=(I+\frac{\Delta t}{2}A_{dy})V^{1,l+1}+\alpha , \\
&(I-\frac{\Delta t}{2}A_{dy})V^{1,l}=(I+\frac{\Delta t}{2}A_{dx})V^{1,l+1/2}+\beta ,
\end{split}
\end{equation}
where auxiliary function $V^{1,l+1/2}$  links above equations. To align (\ref{6-15}) with (\ref{6-10}) we require that
\begin{equation}\label{6-16}
\begin{split}
(I+\frac{\Delta t}{2}A_{d x})\alpha +(I-\frac{\Delta t}{2}A_{d x})\beta =\frac{1}{2}\Delta t A_{dxdy}( V^{1,l+1}+ V^{1,l})-\frac{1}{2}\Delta t(G^{l+1}+G^l).
\end{split}
\end{equation}
One of the possible choices for $\alpha$ and $\beta$ is 
\begin{equation}\label{6-17}
\alpha=\frac{\Delta t}{2}A_{dxdy}V^{1,l+1}-\frac{\Delta t}{2}G^{l+1} ,~~~~\beta =\frac{\Delta t}{2}A_{dxdy} V^{0,l+1/2}-\frac{\Delta t}{2}G^{l}.
\end{equation}
The Peaceman-Rachford scheme for $V^1$  of (\ref{6-1}) is obtained as follows:
\begin{equation}\label{6-18}
\begin{split}
&(I-\frac{\Delta t}{2}A_{dx})V^{1,l+1/2}=(I+\frac{\Delta t}{2}A_{dy})V^{1,l+1}+\frac{\Delta t}{2}A_{dxdy}V^{1,l+1}-\frac{\Delta t}{2}G^{l+1},\\
&(I-\frac{\Delta t}{2}A_{dy})V^{1,l}=(I+\frac{\Delta t}{2}A_{dx})V^{1,l+1/2}+\frac{\Delta t}{2}A_{dxdy} V^{0,l+1/2}-\frac{\Delta t}{2}G^{l}.
\end{split}
\end{equation}
In a first step calculate $V^{1,l+1/2}$ using $V^{1,l+1}$. This step is implicit in direction $x$. In ta second step, defined by equations (\ref{6-18}), we use $V^{1,l+1/2}$  to calculate $V^{1,l}$ . This step is implicit in the
direction of $y$. \\
Notice that due to the use of centered approximations of the derivatives, at $x_{0}=y_0=0$, $x_m=x_{max}$ and $y_n=y_{max}$, there appear external fictitious nodes $x_{-1}=-\Delta x$, $y_{-1}=-\Delta y$, $x_{M+1}=(M+1)\Delta x$ and $y_{N+1}=(N+1)\Delta$. The approximations in these nodes are obtained by using linear interpolation. Thus we have the following relations 
\begin{equation}\label{6-19}
\begin{split}
V^{0,l}_{-1,n}&=2V^{0,l}_{0,n}-V^{0,l}_{1,n},~~~ V^{0,l}_{M+1,n}=2V^{0,l}_{M,n}-V^{0,l}_{M-1,n};~~ n=1(1)N, \\
V^{0,l}_{m,-1}&=2V^{0,l}_{m,0}-V^{0,l}_{m,1}, ~~V^{0,l}_{m,N+1}=2V^{0,l}_{m,N}-V^{0,l}_{m,N-1};~~ m=1(1)M.
\end{split}
\end{equation}
Similarly, we can write the same relations in terms of $V^{1,l}$. Now all values $V^{0,l}_{m,n}$ and $V^{1,l}_{m,n}$ are available. By repeating this procedure for $l=L-1, L-2,...,0$ we obtain $V^{0}_{m,n}$ and $V^{1}_{m,n} $at all time points. The price of a Spread option at time $t_0=0$ can be approximated as :
\begin{equation}\label{6-20}
V(t_0, x, y)\approx V^{0}(t_0,x, y)+\varepsilon V^{1}(t_0,x, y).
\end{equation}

\subsection{Stability and Convergence of the Numerical Scheme}
In this section, we discuss stability and convergence of the numerical schemes introduced in Section 4.1. First we analyze the stability of the Peaceman-Rachford. In this case, we can use the Von Neumann analysis to establish the conditions for stability. This approach was described in \cite{Strikwerda}(Chapter 2.2). The Von Neumann analysis is based on calculating the amplification factor of a scheme ($g$), and deriving conditions under which $|g| \leq 1$.
\begin{theorem}
A one-step finite difference scheme (with constant coefficients) is stable in
a stability region $\Lambda$(any bounded nonempty
region of the first octant of $R^3$ that has the origin as an accumulation point) if and only if there is a constant $c$ (independent of $\theta$, $\phi$, $dt$, $dx$ and $dy$)
such that
\begin{equation}\label{8-0}
g|(\theta,\phi, dt,dx, dy)|\leq 1+cdt.
\end{equation}
Here $g(\theta,\phi, dt,dx, dy)$ is amplification factor of scheme with $(dt, dx, dy) \in \Lambda $. If $ g(\theta,\phi, dt,dx, dy)$ is independent of $dx$, $dy$ and $dt$, the above stability condition can be replaced with the restricted stability condition
\begin{equation}\label{8-1}
|g(\theta, \phi)|\leq 1.
\end{equation}
\end{theorem}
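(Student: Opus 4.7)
The plan is to prove this standard Von Neumann stability theorem via discrete Fourier analysis in the spatial variables. Because the scheme has constant coefficients, I would interpret $V^l = (V^l_{m,n})$ as an $\ell^2(\mathbb{Z}^2)$ grid function and apply the two-dimensional discrete Fourier transform, sending it to $\widehat{V^l}(\theta,\phi)$ on $[-\pi,\pi]^2$. A single time step then acts on the Fourier side by pointwise multiplication by the amplification factor $g(\theta,\phi,dt,dx,dy)$, so $\widehat{V^l} = g^l\,\widehat{V^0}$. By Parseval, stability---meaning a uniform bound $\|V^l\|_{\ell^2} \leq K_T \|V^0\|_{\ell^2}$ for all $l\Delta t \leq T$ and all $(dt,dx,dy)\in\Lambda$---is equivalent to $\sup_{\theta,\phi}|g|^l \leq K_T$, uniformly in $(dt,dx,dy)\in\Lambda$.

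The sufficient direction is then immediate: if $|g(\theta,\phi,dt,dx,dy)| \leq 1 + c\,dt$, then for $l\,dt \leq T$,
\begin{equation*}
|g|^l \leq (1+c\,dt)^l \leq e^{c\,l\,dt} \leq e^{cT},
\end{equation*}
which is exactly stability with $K_T = e^{cT}$.

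The necessary direction is the main obstacle. Assuming stability, I would argue by contradiction: if no uniform constant $c$ works, there is a sequence $(\theta_k,\phi_k,dt_k,dx_k,dy_k)\in\Lambda$ with $|g(\theta_k,\phi_k,dt_k,dx_k,dy_k)| \geq 1 + c_k\,dt_k$ and $c_k \to \infty$. The idea is to choose initial data $V^0_k$ whose Fourier transform is concentrated on a small neighborhood $U_k$ of $(\theta_k,\phi_k)$ and normalized so that $\|V^0_k\|_{\ell^2} = 1$; by continuity of $g$ in $(\theta,\phi)$ for fixed mesh parameters, $|g|$ remains at least $1 + \tfrac{1}{2}c_k\,dt_k$ on $U_k$, and after $l_k \approx T/dt_k$ steps Parseval forces $\|V^{l_k}_k\|_{\ell^2} \geq (1 + \tfrac{1}{2}c_k\,dt_k)^{l_k} \to \infty$, contradicting stability. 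The technical subtleties are to pick $l_k$ with $l_k\,dt_k \leq T$ while still forcing divergence, and to exploit the accumulation-point hypothesis on $\Lambda$ so that such $(dt_k,dx_k,dy_k)$ actually lie in the allowed region as $dt_k \to 0$.

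For the restricted case where $g = g(\theta,\phi)$ does not depend on the mesh parameters, letting $dt \to 0$ in $|g|\leq 1 + c\,dt$ immediately gives $|g(\theta,\phi)| \leq 1$; conversely, this pointwise bound yields $|g|^l \leq 1$ for every $l$, so stability is equivalent to the restricted condition $|g(\theta,\phi)| \leq 1$.
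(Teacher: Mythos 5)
The paper does not actually prove this theorem: its entire ``proof'' is the citation \emph{See \cite{Strikwerda}}, and the statement is the two-dimensional version of the Von Neumann stability theorem proved in that reference. Your argument---discrete Fourier transform plus Parseval to reduce $\ell^2$-stability to a uniform bound on $\sup_{\theta,\phi}|g|^{l}$, the elementary estimate $(1+c\,dt)^{l}\le e^{cT}$ for sufficiency, a wave-packet (Fourier-concentrated) initial datum for necessity, and letting $dt\to 0$ via the accumulation-point hypothesis for the restricted condition---is precisely the standard proof from that source, so it coincides with the paper's intended justification; the only reservation is that your necessity direction remains a sketch whose technical steps (the choice of $l_k$ with $l_k\,dt_k\le T$ still forcing blow-up, and the case where $dt_k$ does not tend to $0$) are flagged but not carried out.
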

\begin{proof}
See \cite{Strikwerda}.
\end{proof}

\begin{remark}
This theorem shows that to determine the stability of a finite difference scheme with the constant coefficient, we only need to consider the amplification factor $g$. This theorem does not apply directly to problems with variable coefficients. Nonetheless, the stability conditions obtained for constant coefficient
schemes can be used to give stability conditions for the same scheme applied to
equations with variable coefficients. The general procedure is that one considers each of the frozen coefficient problems
arising from the scheme. The frozen coefficient problems are the constant coefficient
problems obtained by fixing the coefficients at their values attained at each point in the
domain of the computation. If each frozen coefficient problem is stable, then the variable
coefficient problem is also stable. The interest reader can see the proof of this result in \cite{Lax,Wade}. 
\end{remark}
For finding the amplification factor, a simpler and equivalent procedure is to replace $V^{0,l}_{mn}$ and $V^{1,l}_{mn}$ in the scheme
by $g^{-l}e^{im\theta}e^{in\phi}$ for each value of $l,n$ and $m$. Then, the resulting equation can be solved for the amplification factor.\\
Replacing $V^{0,l+1/2}_{mn}$ and $V^{0,l}_{mn}$  by  $\widehat{g}g^{-l}e^{im\theta}e^{in\phi}$ and $g^{-l}e^{im\theta}e^{in\phi}$  respectively, one gets
\begin{equation}\label{8-2}
\begin{split}
&\dfrac{\Delta t}{2}A_{dx}V^{0,l+1/2}_{m,n}=\dfrac{\Delta t}{2}\widehat{g}g^{-l}e^{im\theta}e^{in\phi}(-2\sigma_ 1^2x_m^2\dfrac{sin^2\dfrac{1}{2}\theta }{\Delta x^2}+rx_m\dfrac{isin\theta}{\Delta x})=\widehat{g}g^{-l}e^{im\theta}e^{in\phi}(-a_1sin^2\dfrac{1}{2}\theta +b_1 isin\theta)\\
&\dfrac{\Delta t}{2}A_{dy}V^{0,-l}_{m,n}=\dfrac{\Delta t}{2}g^{-l}e^{im\theta}e^{in\phi}(-2\sigma_ 2^2y_n^2\dfrac{sin^2\dfrac{1}{2}\phi }{\Delta y^2}+ry_n\dfrac{isin\phi}{\Delta y}-r)=g^{-l}e^{im\theta}e^{in\phi}(-a_2sin^2\dfrac{1}{2}\phi +b_2 isin\phi -c_1)\\
&\dfrac{\Delta t}{2}A_{dxdy}V^{0,l+1/2}_{m,n}=\dfrac{\Delta t}{2}\widehat{g}g^{-l}e^{im\theta}e^{in\phi}\sigma_1 \sigma_2 \rho x_my_n(-\dfrac{sin\theta sin\phi}{\Delta x \Delta y})=-\widehat{g}g^{-l}e^{im\theta}e^{in\phi}c_2sin\theta sin\phi \\
&\dfrac{\Delta t}{2}A_{dxdy}V^{0,l}_{m,n}=\dfrac{\Delta t}{2}g^{-l}e^{im\theta}e^{in\phi}\sigma_1 \sigma_2 \rho x_my_n(-\dfrac{sin\theta sin\phi}{\Delta x \Delta y})=-g^{-l}e^{im\theta}e^{in\phi}c_2sin\theta sin\phi.
\end{split}
\end{equation}
Here
\begin{equation}\label{8-3}
\begin{split}
&a_1(x_m)=\frac{\Delta t\sigma _1^2x_m^2}{\Delta x^2},~~~b_1(x_m)=\frac{\Delta t rx_m}{2\Delta x},~~~c_1=\dfrac{r\Delta t}{2},\\
&a_2(y_n)=\frac{\Delta t\sigma _2^2y_n^2}{\Delta y^2},~~~~b_2(y_n)=\frac{\Delta t ry_n}{2\Delta y} ,~~~~c_2(x_m,y_n)=\frac{\Delta t\sigma _1\sigma _2\rho x_my_n}{2\Delta x\Delta y} .
\end{split}
\end{equation}
Also, by replacing $V^{1,l+1/2}_{mn}$ and $V^{1,l}_{mn}$  by  $\widehat{g}g^{-l}e^{im\theta}e^{in\phi}$ and $g^{-l}e^{im\theta}e^{in\phi}$  respectively, we have that
\begin{equation}\label{8-4}
\begin{split}
&\dfrac{\Delta t}{2}A_{dx}V^{1,l+1/2}_{m,n}=\dfrac{\Delta t}{2}\widehat{g}g^{-l}e^{im\theta}e^{in\phi}(-2\sigma_ 1^2x_m^2\dfrac{sin^2\dfrac{1}{2}\theta }{\Delta x^2}+rx_m\dfrac{isin\theta}{\Delta x})=\widehat{g}g^{-l}e^{im\theta}e^{in\phi}(-a_1sin^2\dfrac{1}{2}\theta +b_1 isin\theta)\\
&\dfrac{\Delta t}{2}A_{dy}V^{1,l}_{m,n}=\dfrac{\Delta t}{2}g^{-l}e^{im\theta}e^{in\phi}(-2\sigma_ 2^2y_n^2\dfrac{sin^2\dfrac{1}{2}\phi }{\Delta y^2}+ry_n\dfrac{isin\phi}{\Delta y}-r)=g^{-l}e^{im\theta}e^{in\phi}(-a_2sin^2\dfrac{1}{2}\phi +b_2 isin\phi -c_1)\\
&\dfrac{\Delta t}{2}A_{dxdy}V^{1,l+1/2}_{m,n}=\dfrac{\Delta t}{2}\widehat{g}g^{-l}e^{im\theta}e^{in\phi}\sigma_1 \sigma_2 \rho x_my_n(-\dfrac{sin\theta sin\phi}{\Delta x \Delta y})=-\widehat{g}g^{-l}e^{im\theta}e^{in\phi}c_2sin\theta sin\phi \\
&\dfrac{\Delta t}{2}A_{dxdy}V^{1,l+1}_{m,n}=\dfrac{\Delta t}{2}g^{-l}e^{im\theta}e^{in\phi}\sigma_1 \sigma_2 \rho x_my_n(-\dfrac{sin\theta sin\phi}{\Delta x \Delta y})=-g^{-l}e^{im\theta}e^{in\phi}c_2sin\theta sin\phi .
\end{split}
\end{equation}
According to "Duhamel's principle" we ignore the $G^{l+1}$ and $G^{l}$ terms in stability analysis (more details \cite{Strikwerda}). We obtain the amplification factor 
\begin{equation}\label{8-5}
g=\dfrac{1-a_2sin^2\frac{1}{2}\phi+ b_2 isin\phi -c_1-c_2sin\theta sin\phi}{(1+a_1sin^2\frac{1}{2}\theta -b_1isin\theta)\widehat{g}}
\end{equation}
where
\begin{equation}\label{8-6}
\widehat{g}=\frac{1+a_2sin^2\frac{1}{2}\phi -b_2isin\phi+c_1}{1-a_1sin^2\frac{1}{2}\theta +b_1isin\theta -c_2sin\theta sin\phi}.
\end{equation}
By arranging, one gets
\begin{equation}\label{8-7}
g=\dfrac{[1-a_1sin^2\frac{1}{2}\theta -c_2sin\theta sin\phi +(b_1sin\theta)i][1-a_2sin^2\frac{1}{2}\phi -c_1-c_2sin\theta sin\phi +(b_2 sin\phi)i]}{[1+a_1sin^2\frac{1}{2}\theta -(b_1sin\theta)i][1+a_2sin^2\frac{1}{2}\phi +c_1 -(b_2sin\phi)i]}.
\end{equation}
Thus
 \begin{equation}\label{8-8}
|g(\theta , \phi)|^2=\dfrac{[(1-a_1sin^2\frac{1}{2}\theta -c_2sin\theta sin\phi )^2+b_1^2sin^2\theta][(1-a_2sin^2\frac{1}{2}\phi -c_1-c_2sin\theta sin\phi)^2 +b_2^2 sin^2\phi]}{[(1+a_1sin^2\frac{1}{2}\theta)^2 +b_1^2sin^2\theta][(1+a_2sin^2\frac{1}{2}\phi +c_1)^2 +b_2^2sin^2\phi]}
\end{equation}
According to (\ref{8-3}), we can write $a_2=Ca_1$, $c_2=\widehat{C} a_1$where $C$ and $\widehat{C}$ are constants. Moreover $\dfrac{b_1}{a_1}\rightarrow 0, as ~\Delta x \rightarrow 0,$ so $b_1=\xi a_1$ as $\xi \rightarrow 0$, and $b_2=\xi a_1$, $c_1=\xi a_1$ . Therefore by replacing in (\ref{8-8}), one gets
\begin{equation}\label{8-9}
\begin{split}
\lim_{\xi \rightarrow 0}g^2_{\xi}&=\lim_{\xi \rightarrow 0}\dfrac{[(1-a_1sin^2\frac{1}{2}\theta -c_2sin\theta sin\phi )^2+b_1^2sin^2\theta][(1-a_2sin^2\frac{1}{2}\phi -c_1-c_2sin\theta sin\phi)^2 +b_2^2 sin^2\phi]}{[(1+a_1sin^2\frac{1}{2}\theta)^2 +b_1^2sin^2\theta][(1+a_2sin^2\frac{1}{2}\phi +c_1)^2 +b_2^2sin^2\phi]}\\
&=\dfrac{(1-a_1sin^2\frac{1}{2}\theta -\widehat{C} a_1 sin\theta sin\phi )^2(1-Ca_1sin^2\frac{1}{2}\phi -\widehat{C} a_1sin\theta sin\phi)^2}{(1+a_1sin^2\frac{1}{2}\theta)^2(1+Ca_1sin^2\frac{1}{2}\phi )^2 }.
\end{split}
\end{equation}
 It is enough to find conditions so that
 \begin{equation}\label{8-10}
 \begin{split}
 \dfrac{(1-a_1sin^2\frac{1}{2}\theta -\widehat{C} a_1 sin\theta sin\phi )^2(1-Ca_1sin^2\frac{1}{2}\phi -\widehat{C} a_1sin\theta sin\phi)^2}{(1+a_1sin^2\frac{1}{2}\theta)^2(1+Ca_1sin^2\frac{1}{2}\phi )^2 } \leq 1.
\end{split}
\end{equation}
Notice that
\begin{equation}\label{8-11}
\begin{split}
a_1sin^2\frac{1}{2}\theta +\widehat{C} a_1 sin\theta sin\phi  &\leq a_1\vert sin^2\frac{1}{2}\theta \vert +\widehat{C}a_1\vert  sin\theta sin\phi \vert \\
&\leq a_1\vert sin\frac{1}{2}\theta \vert [\vert sin\frac{1}{2}\theta \vert +2\widehat{C}\vert cos\frac{1}{2}\theta sin\phi\vert]\\
&\leq a_1[1+2\widehat{C}].
\end{split}
\end{equation}
Thus $1-a_1sin^2\frac{1}{2}\theta -\widehat{C} a_1 sin\theta sin\phi \geq 0$, provided that $a_1[1+2\widehat{C}]\leq 1$. and also we have
\begin{equation}\label{8-12}
\begin{split}
Ca_1sin^2\frac{1}{2}\phi +\widehat{C} a_1sin\theta sin\phi &\leq Ca_1\vert sin^2\frac{1}{2}\phi \vert+\widehat{C} a_1\vert sin\theta sin\phi \vert \\
&\leq a_1 \vert sin\frac{1}{2}\phi \vert [ Csin\frac{1}{2}\phi +4\widehat{C} \vert cos\frac{1}{2}\phi sin\theta \vert \\
&\leq a_1[C+2\widehat{C}].
\end{split}
\end{equation}
Then $1-Ca_1sin^2\frac{1}{2}\phi -\widehat{C} a_1sin\theta sin\phi \geq 0$, provided that $a_1[C+2\widehat{C}]\leq 1$. Therefore we should find conditions so that 
\begin{equation}\label{8-13}
 \begin{split}
 \dfrac{(1-a_1sin^2\frac{1}{2}\theta -\widehat{C} a_1 sin\theta sin\phi )(1-Ca_1sin^2\frac{1}{2}\phi -\widehat{C} a_1sin\theta sin\phi)}{(1+a_1sin^2\frac{1}{2}\theta)(1+Ca_1sin^2\frac{1}{2}\phi ) } \leq 1,
\end{split}
\end{equation}
or equivalently
 \begin{equation}\label{8-14}
\begin{split}
a_1(sin^2\frac{1}{2}\theta +\widehat{C} sin\theta sin\phi +Csin^2\frac{1}{2}\phi)(-2+a_1\widehat{C}sin\theta sin\phi)\leq 0.
\end{split}
\end{equation}
Since $\vert y \vert \leq 1$, then for any $x \in R$, $xy\geq -\vert x \vert$, and  by $C\geq 4\widehat{C}^2$, we have that
\begin{equation}\label{8-15}
\begin{split}
sin^2\frac{1}{2}\theta +\widehat{C} sin\theta sin\phi +Csin^2\frac{1}{2}\phi &\geq \vert sin\frac{1}{2}\theta \vert ^2 -4\widehat{C} \vert sin\frac{1}{2}\theta sin\frac{1}{2}\phi \vert  +4\widehat{C}^2 \vert sin\frac{1}{2}\phi \vert ^2\\ &=(\vert sin\frac{1}{2}\theta \vert -2\widehat{C}\vert sin\frac{1}{2}\phi \vert)^2\geq 0.
\end{split}
\end{equation}
Thus (\ref{8-13}) is satisfied if $a_1 \leq \dfrac{2}{\widehat{C}}$ and $ \vert g(\theta, \phi) \vert \leq 1$ if
\begin{equation}\label{8-16}
\begin{split}
a_1 \leq A=min \lbrace \dfrac{2}{\widehat{C}},\dfrac{1}{1+2\widehat{C}},\dfrac{1}{4\widehat{C}^2+2\widehat{C}}\rbrace ~~or~~~\dfrac{\Delta t}{\Delta x^2} \leq \dfrac{A}{\sigma_1^2.x_{max}^2},~~\dfrac{\Delta t}{\Delta y^2} \leq \dfrac{A}{\sigma_2^2.y_{max}^2}.
\end{split}
\end{equation}
Since $\Delta x=\Delta y$ and $x_{max}=y_{max},$ a sufficient condition for stability of the scheme is
\begin{equation}\label{8-17}
\begin{split}
\dfrac{\Delta t}{\Delta x^2} \leq \dfrac{A}{max \lbrace \sigma_1^2, \sigma_2^2 \rbrace x_{max}^2}.
\end{split}
\end{equation}
Thus, the Peaceman-Rachford scheme is stable if the number of steps in the time interval, $L$, and in the spatial domain, $M=N$, satisfy inequality (\ref{8-17}). This condition is a consequence of the cross-derivative terms. In the absence of these terms, the scheme would be unconditionally stable.

The remaining issue we need to address is the convergence of the numerical method to the true value of the problem. According to \cite{Strikwerda} , this scheme is first-order accurate in time and space and due to stability the scheme is convergent. Results of this convergence are summarized in the next section.

\section{Numerical Results}\label{section5}
 Let us fix the values of the parameters of the marginal dynamical equations according to Table\ref{table1}. We also assume the following form for price impact
 \begin{align*}
\lambda(t) =
\left\{
	\begin{array}{ll}
	\varepsilon(1-e^{-\beta(T-t)^{3/2}}),  &  \underline{S} \leqslant S_1 \leqslant \overline{S}, \\
		0, & otherwise,
	\end{array}
\right.
\end{align*}
where $\varepsilon$ is a constant price impact coefficient, $T-t$ is time to expiry, $\beta$ is a decay coefficient, $\underline{S}$ and $\overline{S}$ represent respectively, the lower and upper limit of the stock price within which there is a impact price.\\
We consider $\underline{S}=60, \overline{S}=140,\varepsilon=0.01$  and $\beta=100$ for the subsequent numerical analysis. Choosing a different value for $\beta, \underline{S}$ and $\overline{S}$ will change the magnitude of the subsequent results, however, the main qualitative results remain valid.

\begin{table}[h]
\begin{center}
\begin{tabular}{|c|c|c|c|c|}
\hline
  & $S(t_0)$ & $\sigma$  & $S_{min}$ &$S_{max}$  \\ 
\hline Asset 1    & 112 &0.15& 0&200  \\ 
\hline  Asset 2  &  104 & 0.10& 0& 200\\  
\hline
 \end{tabular}
\end{center} 
\caption{\footnotesize{Model data together with $r = 0.04$}}\label{table1} 
\end{table}
{\bf Convergence of numerical results.}   As we mentioned in Section\ref{section1}, the exact option values for the option in illiquid market are unknown. Since $\lambda=0$ leads to the standard Black-Scholes model, we compare the results obtained from the numerical method (with $\lambda =0$ and strike $0$) with the Margrabe's closed formula for exchange options (i.e. Spread Option with strike $0$). We fix the values of the parameters 
\begin{figure}
\begin{center}
   \includegraphics[width=13cm]{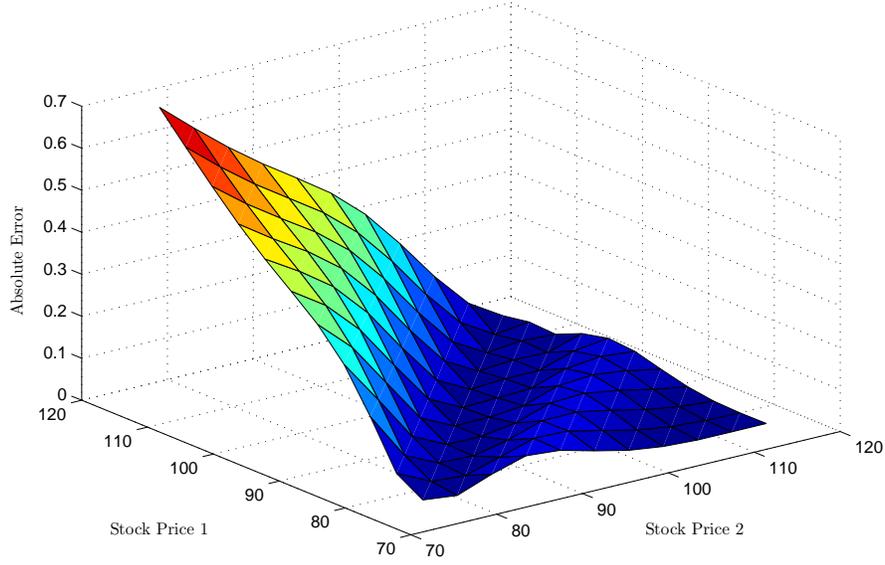}\\
    \caption{\footnotesize {Absolute errors between our approximation  and Margrabe's closed formula, with $\sigma_1=0.15, \sigma_2=0.10, r=0.05,\rho=0.7, T=0.7$ year, $m=50$ and $l=100$. }  }\label{fig1}
    \end{center}
\end{figure}
\begin{table}[!hbt]
\begin{center}
 \begin{tabular}{|c|c|c|c|c|c|c|c|c|}
 \hline
\multicolumn {1}{|c|}{}&{m}&{l} &{$T=0.1$}    & {$T=0.3$}  &{$T=0.5$}    &{$T=0.7$}         &{$ T=1 $ }   \\ 
 \hline $\rho=0.1$  & 50   & 100    &8.1979   &9.1570   &10.0519    &10.8369  &11.8622 \\
                             & 100 & 100    &8.2110   &9.1892   &10.0930    &10.8757  &11.9579    \\
                             & 200 & 200    &8.2153   &9.2373   &10.1607    &10.9727  &12.0041     \\ 
 Margrabe     &   &  &  \textit{8.2323}  & \textit{9.2462}  & \textit{10.1723}     &\textit{10.9892}    &\textit{12.0666}  \\ 
 \hline $\rho=0.5$  & 50   &100     &8.0088     &8.5425   &9.1276     &9.6662      &10.5095  \\                                                          
                             & 100 &100     &8.0591     &8.5983   &9.1961     &9.7205      &10.5405  \\
                             & 200 &200     &8.0687     &8.6222   &9.2209     &9.7843      &10.5636   \\
 Margrabe           &        &         & \textit{8.0692}         & \textit{8.6235}         &\textit{9.2294}      &\textit{9.7949}       &\textit{10.5648}   \\
 
\hline $\rho=0.7$   & 50   &100    &7.9195        &8.2199  &8.6180    & 9.0019  &9.5315   \\                                                           
                             & 100 &210    &7.9734        &8.2509  &8.6296    &9.0929   &9.6244   \\
                             &200  &200    &7.9950        &8.3023  &8.7106    &9.1035   &9.6728     \\
 Margrabe          &        &          &  \textit{8.0186}        & \textit{8.3128}        & \textit{8.7115}      &\textit{9.1110}       &\textit{9.6775}    \\
\hline $\rho=0.9$  & 50    &100    &7.9252   &7.9803  &8.1740    &8.3417   &8.6412     \\                                                            
                            & 100  &100    &7.9310   &7.9852  &8.1894  &8.3532   &8.6498     \\ 
                            & 200  &200    &7.9938   &8.0515  & 8.2032   &8.3686   &8.6571     \\
 Margrabe          &        &          &  \textit{8.0005}   & \textit{8.0588}        &\textit{8.2015}       &\textit{8.3799}       &\textit{8.6675}    \\ 
 \hline
 \end{tabular} 
\end{center}
\caption{\footnotesize{Convergence of the Peaceman-Rachford method to Magrabe formula. Data are given in Table\ref{table1}.}}\label{table2}
\end{table}
according to Table\ref{table1}, and vary the values of the correlation coefficient $\rho$. Results of this convergence study are summarized in Table\ref{table2}. We can see from the table that the agreement is excellent. We plot the absolute error of our approximation (using $\lambda =0,$ strike $0$ and Margrabe's closed formula as benchmark) against the stocks in Fig\ref{fig1}. Results of the numerical method for Spread option in illiquid market are stated in Table\ref{table3}. 
\begin{table}[!hbt]
 \begin{center}
 \begin{tabular}{|c|c|c|c|c|c|c|c|c|c|c|c|}
 \hline
\multicolumn {1}{|c|}{} &{$k=-15$}  & {$k=-5$} &{$k=-2$} & {$k=0$} &{$k=2$}   & {$k=5$}  & {$k=10$}   & {k=20}   \\ 
 \hline $\rho=0.1$    & 15.0929        & 7.1600      & 5.3275      & 4.2936   & 3.4027    & 2.3395      & 1.1267       & 0.1905   \\
   Excess price       &  0.0001         & 0.0005      & 0.0005      & 0.0005   & 0.0005    & 0.0005      & 0.0003       & 0.00006   \\
 \hline $\rho=0.5$   &  14.7992       & 6.2972      & 4.3645      & 3.3368   & 2.4486    & 1.4909      & 0.5435       & 0.0426    \\                                                                                                                                    
   Excess price       &  0.0001         & 0.0007      & 0.0009      & 0.0009   & 0.0009    & 0.0007      & 0.0003       & 0.00003    \\
 \hline $\rho=0.7 $  &  14.7085       & 5.7956      & 3.7731      & 2.7085   & 1.8642    & 0.9981      & 0.2593       & 0.0055   \\                                                                                 
   Excess price         &  0.00006       & 0.0009      & 0.0013      & 0.0013   & 0.0012    & 0.0009      & 0.0004       & 0.00001   \\
 \hline $\rho=0.9$  &  14.6833       & 5.2299      & 3.0523      &1.9601    & 1.1531    & 0.4387      & 0.0088       & 0.0029    \\                                                                       
   Excess price       &  0.00003       & 0.0013      &  0.0020     &0.0020    & 0.0018    & 0.0012      & 0.0003       & 0.00000   \\   \hline
\end{tabular} 
\end{center}
\caption{\footnotesize{ The values of a 0.4 year European call Spread
option based on different correlation, and strikes. Excess price shows the difference in call Spread option from Black-Scholes. The values of the parameters used for these runs
are $\sigma_1=0.15, \sigma_2=0.10, r=0.05$ with $m = l = 100$.}}\label{table3} 
\end{table}

{\bf Replicating cost.}
Next investigate the effects of the price impact (full feedback model) on the replication cost of Spread option. We investigate the excess price which is the difference between the call price in the full feedback model and the corresponding Black-Scholes price. 

\begin{figure}[!hbt]
\begin{center}
   \includegraphics[width=13cm]{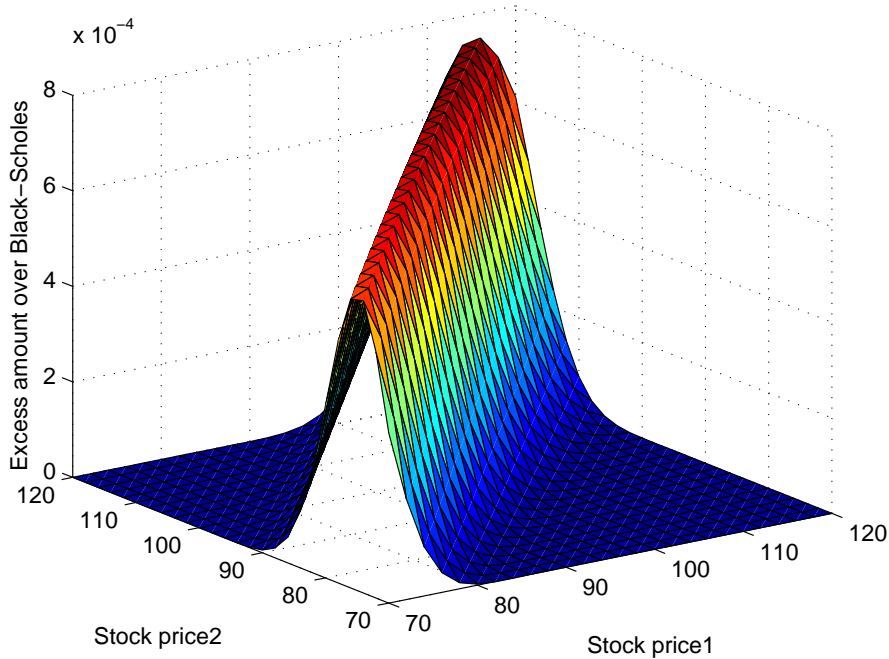}
    \caption{\footnotesize {The call price difference (full feedback model and classical model) as a function of stock price at time $0$ against $S_1$ and $S_2$. $K=5, \sigma_1=0.3, \sigma_2=0.2, r=0.05, \rho=0.7, T=0.1$, and $m=l=100$.}  }\label{fig2}
    \end{center}
\end{figure}

\begin{figure}[!hbt]
\begin{center}
   \includegraphics[width=12cm]{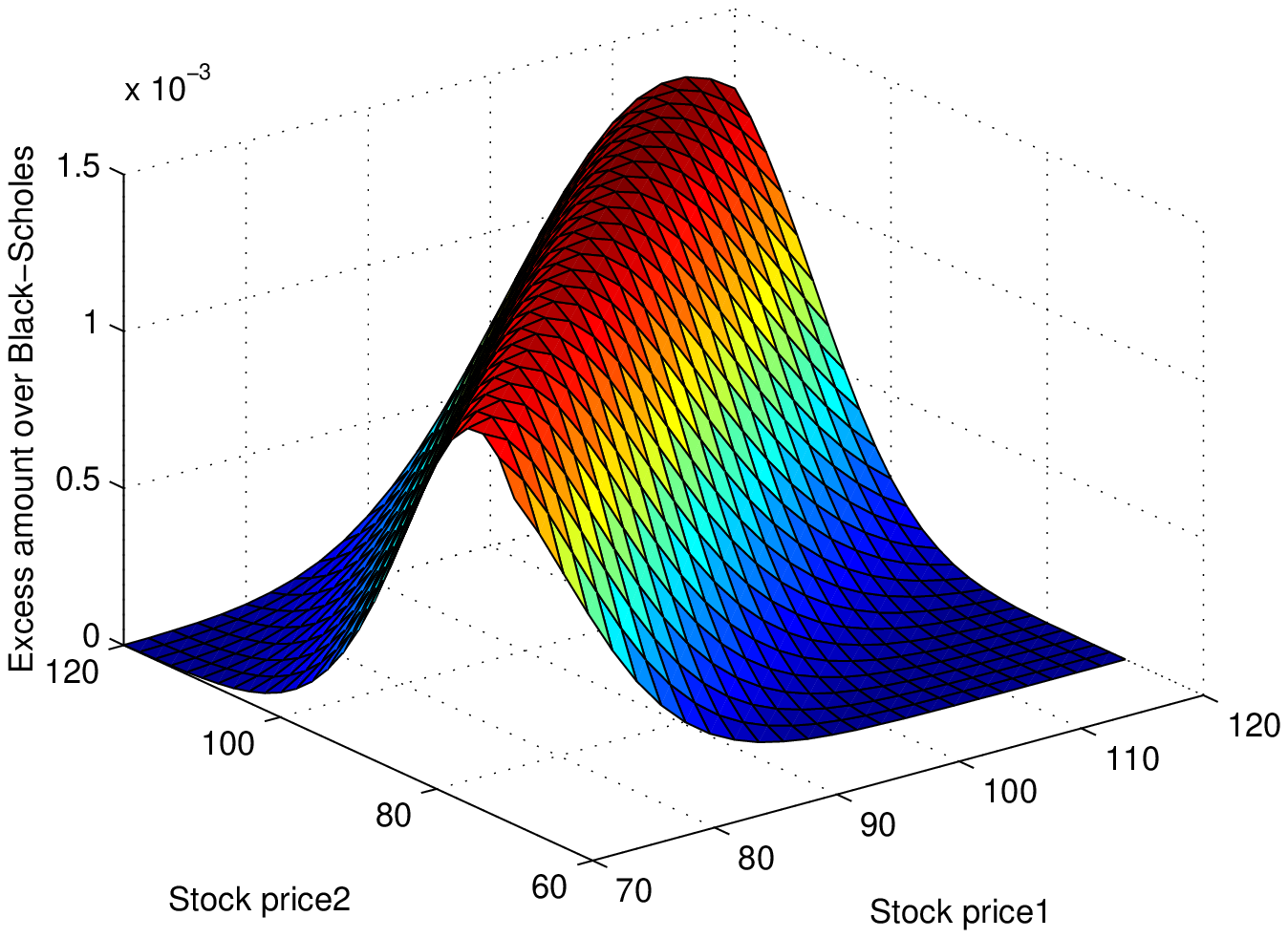}
    \caption{\footnotesize {The call price difference (full feedback model and classical model) as a function of stock price at time $0$ against $S_1$ and $S_2$. $K=5, \sigma_1=0.3, \sigma_2=0.2, r=0.05, \rho=0.7, T=0.4$, and $m=l=100$.
 }  }\label{fig3}
    \end{center}
\end{figure}

\newpage
\begin{figure}[!hbt]
\begin{center}
   \includegraphics[width=12cm]{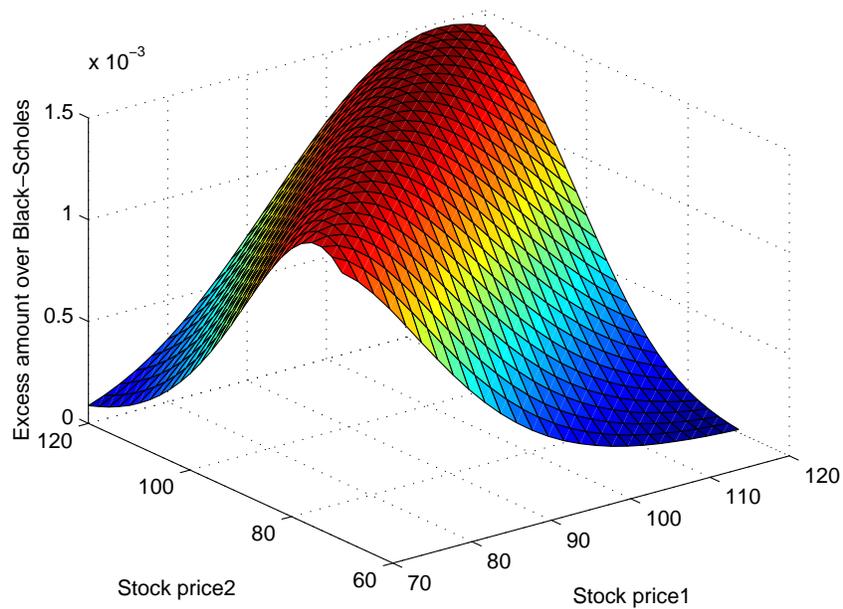}
    \caption{\footnotesize {The call price difference (full feedback model and classical model) as a function of stock price at time $0$ against $S_1$ and $S_2$. $K=5, \sigma_1=0.3, \sigma_2=0.2, r=0.05, \rho=0.7, T=1$, and $m=l=100$.} }\label{fig4}
    \end{center}
\end{figure}
These figures indicate that the Spread option price in the full feedback model is higher than the classical Spread option price.

\begin{figure}[!hbt]
\begin{center}
   \includegraphics[width=19cm]{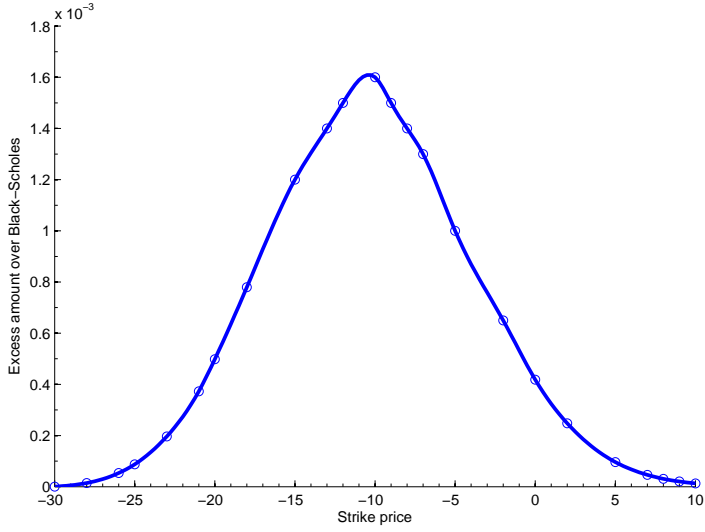}
    \caption{\footnotesize {The call price difference (full feedback model and classical model) against the strike price $K$. $S_1(t_0)=100, S_2(t_0)=110, \sigma_1=0.15, \sigma_2=0.10, r=0.05, \rho=0.7$, $T=0.4$ year and $m=l=100$.  
}  }\label{fig5}
    \end{center}
\end{figure}
As the option becomes more and
more in the money and out of the money, the excess price converges monotonically to
zero.  

\section{Conclusion}\label{section6}
In this work, we have investigated a model which incorporates illiquidity of the underlying asset into the
classical multi-asset Black-Scholes-Merton framework. We considered the full feedback model in which the hedger is assumed to be aware of the feedback effect and so would change the hedging strategy accordingly. Since there is no analytical formula for the price of an option within this model, we applied the Matched Asymptotic Expansions technique to linearize the partial differential equation characterizing the price. 
 We applied a standard alternating direction implicit method (Peaceman-Rachford scheme) to solve the corresponding linear equations numerically. We also discussed the stability and the convergence of the numerical scheme. By running a numerical experiment, we investigated the effects of liquidity on the Spread option pricing in the full feedback model. Finally, we found out that the Spread option price in the market with finite liquidity (full feedback model), is more than the Spread option price in the classical Black-Scholes-Merton framework.

\end{document}